\newtheorem{theorem}{Theorem}
\newtheorem{corollary}{Corollary}
\newtheorem{lemma}{Lemma}
\DeclareRobustCommand{\qed}{%
  \ifmmode 
  \else \leavevmode\unskip\penalty9999 \hbox{}\nobreak\hfill
  \fi
  \quad\hbox{\qedsymbol}}
\newcommand{\openbox}{\leavevmode
  \hbox to.77778em{%
  \hfil\vrule
  \vbox to.675em{\hrule width.6em\vfil\hrule}%
  \vrule\hfil}}
\newcommand{\qedsymbol}{\openbox}
\newenvironment{proof}[1][\proofname]{\par
  \normalfont
  \topsep6\p@\@plus6\p@ \trivlist
  \item[\hskip\labelsep\itshape
    #1.]\ignorespaces
}{%
  \qed\endtrivlist
}
\newcommand{\proofname}{Proof}
\title{Ideal Composition of a Group for Maximal Knowledge Building in Crowdsourced Environments}
\author{{\large \bf Anamika Chhabra, S. R. S. Iyengar and Jaspal S. Saini}\\
 {anamika.chhabra@iitrpr.ac.in}, sudarshan@iitrpr.ac.in, jaspal.singh@iitrpr.ac.in\\
  Department of Computer Science and Engineering \\
  Indian Institute of Technology Ropar, India}
\begin{document}

\maketitle

\begin{abstract}
Crowdsourcing has revolutionized the process of knowledge building on the web. Wikipedia and StackOverflow are witness to this uprising development. However, the dynamics behind the process of crowdsourcing in the domain of knowledge building is an area relatively unexplored. It has been observed that an ecosystem exists in the collaborative knowledge building environments (KBE)\cite{chhabra2015presence}, which puts users of a KBE into various categories based on their expertise. Classical cognitive theories indicate triggering among the knowledge units to be one of the most important reasons behind accelerated knowledge building in collaborative KBEs. We use the concept of ecosystem and the triggering phenomenon to highlight the necessity for the right mix of users in a KBE. We provide a hill climbing based algorithm which gives the ideal mixture of users in a KBE, given the amount of triggering that takes place among the users of various categories. The study will help the portal designers to accordingly build suitable crowdsourced environments.

\textbf{Keywords:} 
Crowdsourcing; knowledge building; ecosystem; triggering
\end{abstract}
\section{Introduction}
Due to the introduction of Web 2.0 and Web 3.0 \cite{lassila2007embracing,hendler2009web}, a collection of tools for collaboration, integration and interaction have become available on the internet \cite{bryant2006social}. Accessing knowledge on any given topic is not a difficult task nowadays. The online knowledge building systems have become successful largely because of crowdsourcing. It is a technique provided by Web 2.0 which is used to gather information from the people in the crowd.  In a crowdsourced knowledge building system, the knowledge-generation process is outsourced to a community of users \cite{doan2011crowdsourcing}. Some of the knowledge building systems effectively exploiting the benefits of crowdsourcing are Wikipedia, StackOverflow and Quora. These systems aggregate the human knowledge on various topics and have been successful in making use of the immense potential of the masses. Many of these systems also make use of various incentivizing mechanisms to motivate users to participate more in the knowledge building process \cite{anderson2013steering}. Due to the ease of access of internet, the process of knowledge building has further evolved \cite{saxton2013rules}. Even knowledge seekers are participating in the knowledge building process by for example, asking questions \cite{wang2013wisdom, hanrahan2012modeling}. These questions then trigger other users to add their knowledge to the system \cite{dierkes2003handbook,anderson2014engaging}. Prime examples include StackOverflow and Quora. Also, group knowledge building by a mixture of experts and non-experts has mostly replaced individual knowledge building by experts \cite{kittur2008harnessing,galton1907vox,surowiecki2007wisdom}. All this has inspired us to look into the dynamics of group knowledge building. We begin by addressing the following set of questions: Why is it that  groups perform better than individuals in knowledge building? In order to have the best knowledge building experience, should we take care of certain things while forming groups? In general, does the composition of the group matter? 

\begin{figure}[!ht]
\centering
\includegraphics[scale=0.3]{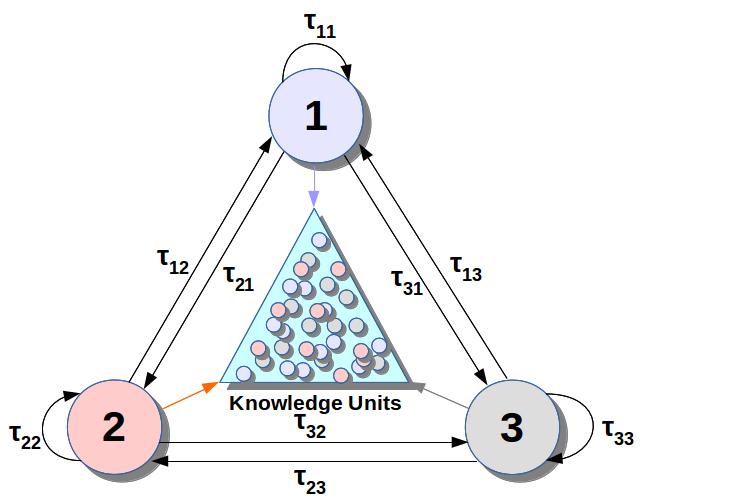}
\caption{The Triggering Phenomenon (a) A knowledge building system consisting of three categories getting triggered by each other (b) The triggering matrix $T$}
\label{tri}
\end{figure}

To investigate these questions, our analysis is based on the existence of ecosystem \cite{chhabra2015presence1} in crowdsourced KBEs, Luhmann's theory of autopoietic systems \cite{seidl2004luhmann, luhmann1995social} and Piaget's theory of equilibration \cite{piaget1977development, piaget1976piaget}. We also follow on the classical literature on the triggering phenomenon occurring among the knowledge frames. In \cite{chhabra2015presence}, the authors observed the existence of ecosystem among the users of a knowledge building system, and based on that, divided them into various categories. Each of these categories possesses a different skill set and participates in the knowledge building process in a different manner. Following Luhmann's theory, we propose that groups in a knowledge building system perform better, because these categories create an enormous amount of perturbation among each other. This perturbation leads to triggering of more ideas into the KBE. The theory when applied to modern knowledge building systems gives an insight into why people participate in these crowdsourced systems. The perturbation that is created on seeing some content on these systems inspires them to contribute more. According to Piaget's theory, this phenomenon of triggering will go on until the state of equilibration, after which the triggering reduces. This theory explains the reasons behind the slowing growth of content on Wikipedia with time \cite{suh2009singularity, lam2011past}.

The paper provides a thorough investigation of the crowdsourced knowledge building process in terms of ecosystem and triggering. We differentiate between the knowledge that the users add to the system without looking at each others' ideas and the knowledge that they add when they come across the content posted by others. We see that the latter constitutes a considerable part of the knowledge built by a crowdsourced system. The proposed model helps in understanding the reason behind the benefits that a group provides, as compared to individual knowledge building. We further show that it is important to have the right mix of users in a knowledge building portal for it to be more effective. For example, even the users who have good knowledge of the subject matter, might not add their knowledge to the system unless they are instigated. Hence, it is equally important to have a good number of users in the system who can ask meaningful questions. This way, even people having less knowledge about the subject participate in the knowledge building process. 

{\textit{Outline.} The remainder of the paper is organized as follows: In section 2, we discuss related work on knowledge building. Section 3 first explains certain classical theories which have inspired the development of the current model along with the model and then gives an algorithm to find the ideal distribution of users given the triggering matrix. In section 4, we conclude and discuss directions for future work.}

\section{Related Work}
In the past, most of the models on knowledge building have focused on various characteristics that a KBE should possess. However, the characteristics of users that are participating in the knowledge building process have not been well understood.

Nonaka \citeyear{nonaka1994dynamic} explains that the knowledge creation happens through a constant dialogue between tacit knowledge\footnote{It is codified knowledge which is transmittable in some formal language} and explicit knowledge\footnote{It is the knowledge that human beings possess and acquire through their experience over time}. The author developed a theoretical framework which provides an analytical perspective on various dimensions of knowledge creation. Gerry Stahl \citeyear {stahl2000model} provides a conceptual framework for collaborative KBEs which consists of important phases that should be supported by any computer-based KBE. The model given by the author explains the relationship of collaborative group processes to individual cognitive processes. Scardamlia and Brieter \citeyear{scardamalia1994computer} in their work consider learners also as the members of knowledge building community. They emphasized the importance of a community effort rather than individual effort for accelerated knowledge building. Cress et al. \citeyear{cress2008systemic} provide a theoretical framework for describing the process of learning and knowledge building and the way these two processes influence each other. They make use of Luhmann theory and Piaget's theory of equilibration and propose two processes externalization\footnote{Externalization is a process by which users add their knowledge to the system} and internalization\footnote{Internalization is the process of taking the information from the system} as the basis of interaction between the social and the cognitive system. Minsky \citeyear{minsky1977frame} and Rumelhart \citeyear{rumelhart1991understanding} consider the knowledge units as frames. Norman \citeyear{norman1981categorization} and Just et. al \citeyear{just1980theory} assert that the frames are linked together and one frame may trigger other frames. When a frame is triggered, all the frames linked to it are triggered. Fisher and Lipson \citeyear{fisher1985information} state that for a given frame to come into the system, suitable triggering conditions are required.

A study which is conducted in the context of a right mix of users in the domain of problem solving has been performed by Scott \citeyear{hong2004groups, page2008difference}. The author states that a group of randomly selected people outperforms a group of best performing people. This is due to the fact that these random people bring diverse knowledge into the system and hence are able to perform better while solving a problem. On contrary, the best performing agents bring in similar type of knowledge to the system and hence might not be able to solve the problem that well. However, Thompson \citeyear{thompson2014does} came up with a counter paper to Scott's work recently claiming that his paper does not provide any foundation for the argument that diversity actually trumps ability. Krause et al. \citeyear{krause2011swarm} argue that adding diversity to a group can be more advantageous than adding expertise to the group. Erickson et. al \citeyear{erickson2012hanging} provide a framework to select the crowd matching organizational needs. The authors state that different tasks require different crowds with different skills and knowledge. Kobern et. al \cite{kobren2015getting} observe in their recent work that intelligently assigning tasks to the users significantly increases the value of a crowdsourced system.

\section{The Triggering Model} 
This section first introduces the motivation behind the model by presenting classical theories which explain the phenomenon of triggering among the knowledge units and a brief of the existence of ecosystem in crowdsourced environments. It then describes the types of KUs followed by the problem statement. The Triggering Model and the algorithm are explained afterwards.
\subsection{Motivation}
This subsection introduces Luhmann theory, Piaget's theory of equilibration and the previous work on ecosystem.
\subsubsection{Elementary Theories on Cognition and Triggering}
Luhmann’s theory describes a knowledge building system as an autopoietic system \footnote{ Autopoietic system refers to a system which once started, keeps recreating and maintaining itself.}. It states that once started, further ideas (or cognitions) are produced  by existing ideas of the same system. The existing ideas create perturbations in the cognitive system of users, which then trigger more ideas.The theory clearly distinguishes cognitive systems from the KBE. We apply the same concept to the knowledge building process, where users are the cognitive systems who build knowledge in the KBE. The changes in the environment lead to irritations in the cognitive systems. The theory further talks about the concept of \textit{structural coupling}, which is the relation between cognitive systems and the environment. It states that although environment can create irritations in all the cognitive systems (users here), it might not be able to trigger all of them. The actual systems that can get triggered due to these perturbations are determined by the structural coupling of these systems to the environment. Also, different cognitive systems may have different structural coupling.

The question arises whether this phenomenon of triggering goes on indefinitely, or does it reach some threshold point. Piaget’s Model of equilibration states that people contribute to the knowledge building process because of cognitive conflicts (perturbations as per Luhmann Model), which means that when they see some information that is incongruent to their existing knowledge, it creates a disturbance in their mind. This disturbance leads them to add their knowledge to the system.  This knowledge addition then leads to the equilibration between the system’s knowledge and the user’s knowledge. We further build on this theory explaining that initially, there is a huge need for equilibration, hence enormous amount of knowledge gets added to the system. However, as time passes, users’ knowledge starts matching with the knowledge of the system, and hence, there is less disturbance. This provides some support for slowing growth of knowledge building in the KBEs with time.  

Lateral thinking and triggering process \cite{rumelhart1991understanding} are the main reasons for generation of ideas in a collaborative KBE. Triggering is a procedure by which an idea or a comment spearheads the generation of another idea or a thought \cite{just1980theory}. It is associated with how an idea or a comment becomes an impetus for the generation of other ideas. The generation of knowledge units in terms of triggering has also been explained in the information processing theory \cite{norman1981categorization, seidl2004luhmann}.  This theory characterizes human knowledge as a series of ‘Knowledge Frames’. The frames are related to each other by various conditions. When a frame is triggered by stimuli, other frames, which are linked to this frame, may also be triggered. The frames play an important role in guiding the way for creating and retrieving more knowledge.

\subsubsection{Ecosystem and Formation of Categories}
In \cite{chhabra2015presence}, the authors conducted experiments on a custom-made annotation system CAS and observed that people showed expertise in performing mainly one kind of activity while participating in the knowledge building process. The existence of ecosystem was observed in Wikipedia and StackOverflow as well. This phenomenon divides all the users into various categories based on their expertise. For example, the categories observed in CAS (Crowdsourced Annotation System) were Articulators, Probers, Solvers and Explorers. There were different categories observed in the case of Wikipedia and StackOverflow. These observations led to the conclusion that ecosystem is a characteristic of crowdsourced KBEs. The categories formed in the ecosystem trigger each other with different triggering factors. Figure~\ref{tri} shows the triggering phenomenon occurring in a three category system. $\tau_{ij}$ is the triggering factor, which quantizes the number of knowledge units of category $i$ that will be generated per user of category $i$ due to one knowledge unit of category $j$. This gives rise to a triggering matrix storing the triggering factors for all the categories. The triggering matrix for a three category scenario is shown in fig. 1(b). We intuitively believe that some values in this matrix will be low, high or some will even be zero, depending on whether one category is a prerequisite for the generation of KUs of another category or not. For example, in CAS, a question triggers users in `Solver' category more than it may trigger the users in `Pointer' category. We call triggering within categories \textit{`Intra-triggering'} and across categories as \textit{`Inter-triggering'}. The values of intra-triggering are perceived to be less than those of inter-triggering due to the similarity of traits among the users of same category. The self loops of categories in the diagram represent intra-triggering.

A collaborative knowledge building system, due to the presence of ecosystem, provides an ideal environment for the triggering process to take place. The users in one category trigger the users in all other categories. Due to this triggering process only, these collaborative environments are able to realize `the whole is greater than sum of its parts' phenomenon.

\subsection{Sources of Knowledge Units in a Knowledge Building System}
We can classify all the knowledge generated in the system on the basis of whether it is an outcome of group dynamics or not. The total generated knowledge can be divided into two categories: (i) Internal Knowledge and (ii) Triggered Knowledge. 

\subsubsection{Internal Knowledge}
Internal knowledge is a subset of the user's knowledge which is added to the system independent of the effect of group dynamics. This is precisely the knowledge that the user would have added to the system if she had been participating in the knowledge building process individually (and not in a group). Hence, addition of internal knowledge to the system consists of only the process of externalization. As an example, consider the following experiment: If a user is asked to name all the countries in the world (which are more than 190 in total), assume, she is able to come up with 40-50 of these countries. These generated knowledge units are what we consider as her internal knowledge. Please note that in this case, it may so happen that the user knows some more countries' names, but currently she does not recall them. These names are not a part of internal knowledge since they never got added to the system. 

Please note that different users may have overlapping internal knowledge. This hints us towards the fact that there exits an upper bound on the internal knowledge contribution of a particular category, i.e. after a certain threshold number of users of category $i$, adding more users to this category will not lead to more internal knowledge contribution of the category. This shows that the `law of diminishing benefits' holds here rather than the `law of large numbers'. This information may help us to find an upper bound on the maximum number of users in a particular category.

\subsubsection{Triggered Knowledge}
This is the kind of knowledge that gets added to the system as a result of the group dynamics. When people participate in the knowledge building process as a group, they get triggered on seeing each others' ideas and hence, generate more knowledge. This knowledge is called triggered knowledge. Addition of triggered knowledge to the system is a process, in which first internalization and then externalization takes place. The users first internalize the knowledge from the system, and then externalize their own knowledge to thee system. This externalized knowledge can further be of two types:
\begin{enumerate}
\item First type consists of the knowledge that the user had in her cognitive system already, but she gets reminded of it only after getting triggered by some other user's idea. In our previous example on countries' names, for example, it may so happen that on hearing about some country of Europe, she gets reminded of some other countries' names from Europe. This type of knowledge is called \textit{`Recollected Knowledge'}.
\item The other type consists of the new knowledge which was not in users' cognitive system already. This type of knowledge has also been referred to as \textit{Emergent Knowledge} \cite{cress2008systemic}. For example, consider the scenario where a user knows these two facts: (a) There are infinite Fermat Numbers. (b) Every number can be uniquely decomposed into a product of prime numbers (Fundamental Theorem of Arithmetic). Now consider that she comes across another fact which was added to the system by some other user, which is, (c) Every two Fermat Numbers are co-prime to each other. The user combines her existing knowledge ((a) and (b)) and the knowledge internalized from the system ((c)) and comes up with a new knowledge, for example, that there are infinite prime numbers. This new deduced knowledge when added to the system, also falls under the category of triggered knowledge.
\end{enumerate}

\subsection{Problem Statement}
\textbf{Problem:} \textit{Given the number of users, the number of categories, the total internal knowledge of the users and the amount of triggering taking place among the categories, can we output an ideal number of users for each category that leads to the maximum knowledge building and hence accelerates the knowledge building process?}

To motivate the discussion of the problem, let us assume a simplistic two-category scenario as shown in the figure~\ref{two_cat}. As we can see that when the users of category $B$ add a KU to the system, the users of category $A$ are triggered to a very less extent (i.e. $0.1$). So does this mean that we should keep as less as possible number of users in category $A$ However, we also see that whenever users of category $A$ add a KU to the system, it triggers the users of category $B$ to a great extent (i.e. $0.9$). Now, if it is given that we can take only $100$ total users in the system, how many users should we assign to both the categories so that the total knowledge generated is maximum. Surprisingly, in this case, 50 users in each of the categories will lead to the maximum knowledge building if it is given that all the users have same amount of internal knowledge. In general if we are given the amount by which various categories trigger each other, can we find out an ideal number of users that should be assigned to each category for maximum knowledge building?

\begin{figure}[!ht]
\centering
\includegraphics[scale=0.3]{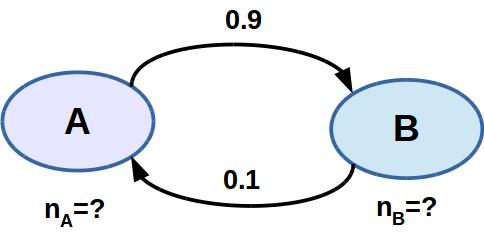}
\caption{A Simplistic two-category Case}
\label{two_cat}
\end{figure}

To get an answer to the problem explained above, we will first build a model of crowdsourced knowledge building which makes use of the triggering phenomenon, ecosystem and the two types of knowledge units (internal and triggered). With the help of this model, we will show that different distributions of users across categories gives rise to varying amount of total knowledge produced. We will then develop an algorithm that takes the triggering matrix as input and gives the ideal distribution of users leading to the maximal knowledge building.

\subsection{The Model}
Consider a knowledge building system (KBS) consisting of $n$ users. We assume that ecosystem exists in the KBS i.e. each users tends to be an expert \textit{mainly} in one particular type of activity. Further we consider a stricter version of the above scenario where each user belongs to \textit{only} one category i.e. a user will generate only one type of knowledge units, although it can trigger users to generate varied types of knowledge units. This assumption has been made to reduce the complexity of the model. In practice, it may happen that the users contribute to rest of the categories to a small extent. Given that the current model mainly concentrates on the triggering phenomenon in KBS, incorporation of this small amount of contribution of knowledge to the other categories will make the model extremely complex, and may be considered in the next version of the model. We assume that all the $n$ users enter the knowledge building system at time $t=0$. For example, in settings like NB \cite{zyto2012successful} and CAS \cite{chhabra2015presence}, all the users are usually present in the system at the start of the knowledge building process. Throughout the text, we will be using the words `type' and `category' interchangeably.

Let $n_i$ be the number of users in category $i$ and $m$ be the number of categories such that,
$$\sum_{i=1}^{m}n_i=n$$
Let $r_i$ be the internal knowledge contribution of a user of category $i$. For simplicity, we assume that all the users of a category add same amount of internal knowledge to the system. This assumption has been taken so that the model can concentrate on explaining the most important feature which is triggering among the categories. This assumption, however, can be easily relaxed without significant changes to the model. The users of a category will add their internal knowledge to the system in discrete time stamps depending on the type of knowledge building system under consideration. In some cases, they may add it in the beginning. In other cases, they may keep adding it gradually to the system as some function of time. Let $r_i(t)$ be the internal knowledge contribution of a user of category $i$ at time $t$ such that,

$$\sum_{t=0}^{\infty}r_i(t)=r_i$$

Let $\tau_{ij}$ be the number of knowledge units (KUs) of type $i$ that get triggered due to one KU of type $j$. Now, our aim is to calculate the total number of KUs of type $i$ that get added to the KBS at time $t$. Let $k_i(t)$ represent that number. We consider that the number of KUs of a category that get added at time $t$ are directly dependent on the following parameters:
\begin{enumerate}
\item The number of KUs of all the categories\footnote{Including the category under consideration, due to some amount of intra-triggering} that get added to the system at time $t-1$. This is because of the fact that all the categories' KUs will trigger new KUs of the considered category at the next time step.
\item The triggering factors from all other categories to the considered category.
\item The number of users in the considered category. More the number of users, more the KUs generated\footnote{Upto a certain threshold, after which diminishing benefits appear due to overlap of knowledge of users of the same category}.
\item The internal knowledge of the users of the considered category.
\end{enumerate}

These parameters give rise to the following expression for $k_i(t)$,
$$k_i(t)=n_i(\tau_{i1}k_1(t-1)+\tau_{i2}k_2(t-1) + \dots + \tau_{im}k_m(t-1))+n_ir_i(t)$$
i.e.,
\begin{equation}\label{eq:1}
k_i(t)=n_i\left(\sum_{j=1}^{m}\tau_{ij}k_j(t-1) + r_i(t)\right) 
\end{equation}
$\forall 0 \leqslant i \leqslant m$

The above expression gives us the number of KUs of category $i$ that get added to the system. We are now interested to similarly compute the KUs of all the categories that get added to the system at time $t$.\\
Let $K(t)$ represent the column vector consisting of the knowledge generated by various categories at time $t$ as its elements.
\[ 
k(t)=
\begin{bmatrix}
k_1(t)\\
k_2(t)\\
\vdots\\
k_m(t)
\end{bmatrix}
\] 

Let $T$ be the triggering matrix, $N$ be a diagonal matrix storing the number of users in each category and $R(t)$ be the column matrix storing the function by which each category users add their internal knowledge to the system. The matrices $T$, $N$ and $R(t)$ are shown as below:
\[
T=
\begin{bmatrix}
\tau_{11} & \tau_{12}&\cdots &\tau_{1m}\\
\tau_{21} & \tau_{22}&\cdots&\tau_{2m}\\
\vdots & & &\vdots\\
\tau_{m1}& \cdots & \cdots & \tau_{mm}\\
\end{bmatrix}
\]

\[N=
\left(
 \begin{array}{ccccc}
   n_1\\
    & n_2 & & \text{\huge0}\\
    & & \ddots\\
    & \text{\huge0} & & 1\\
    & & & & n_m
 \end{array}
\right)
\]
 
\[
R(t)=
\begin{bmatrix}
r_1(t)\\
r_2(t)\\
\vdots\\
r_m(t)\\
\end{bmatrix}
\]
 
Given the above notations, we can write equation~\ref{eq:1} as:
\begin{equation}\label{eq:2}
K(t)=N(TK(t-1) + R(t))
\end{equation}
where $K(t)$, $N$, $T$ and $R(t)$ are the matrices as defined above. The above equation gives us a recursive formula for finding the total number of KUs added to the system at time $t$. Our aim now is to calculate the total KUs of each category added to the system, when $N$, $T$ and $R$ are given, where $R$ is defined as below:
\begin{align} \label{mat:R}
R=
\begin{bmatrix}
r_1\\
r_2\\
\vdots \\
r_m\\
\end{bmatrix}
\end{align}
where $r_i$s represent total internal knowledge of category $i$. We will soon show that the total knowledge generated in the system is independent of the function $r(t)$ for any category, i.e. in whichever way the users may add their internal knowledge to the system, the total knowledge generated in the system over a sufficiently long period of time in all cases comes out to be equal.

\begin{lemma}
The total knowledge generated in the system at time $t$ is given by:
$$ K(t)=\sum_{i=0}^{t}(NT)^iNR(t-i)$$
\end{lemma}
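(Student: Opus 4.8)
The plan is to prove the closed form by induction on $t$, treating Equation~\ref{eq:2} as the defining linear recurrence. First I would rewrite that recurrence in the additive form $K(t)=NTK(t-1)+NR(t)$, which exhibits it as a discrete-time linear system driven by the inhomogeneous ``source'' term $NR(t)$. The asserted formula $K(t)=\sum_{i=0}^{t}(NT)^{i}NR(t-i)$ is then recognizable as the unrolled (convolution / variation-of-parameters) solution of this recurrence, so induction is the natural vehicle and I expect no genuinely hard analytic content — the work is purely algebraic bookkeeping with matrices.

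For the base case I would take $t=0$. Here I must fix the boundary convention that no knowledge exists before the users enter the system at time $t=0$, i.e. $K(-1)=0$; with this convention the recurrence gives $K(0)=NTK(-1)+NR(0)=NR(0)$, which matches the claimed formula since its right-hand side collapses to the single term $(NT)^{0}NR(0)=NR(0)$. Settling this convention explicitly is worth a sentence, because the whole identity hinges on where the sum is anchored.

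For the inductive step I would assume $K(t-1)=\sum_{i=0}^{t-1}(NT)^{i}NR(t-1-i)$ and substitute it into $K(t)=NTK(t-1)+NR(t)$. Multiplying the hypothesis on the left by $NT$ turns each $(NT)^{i}$ into $(NT)^{i+1}$, giving $\sum_{i=0}^{t-1}(NT)^{i+1}NR(t-1-i)+NR(t)$. Re-indexing with $j=i+1$ shifts this to $\sum_{j=1}^{t}(NT)^{j}NR(t-j)$, and the leftover term $NR(t)$ is exactly the missing $j=0$ summand $(NT)^{0}NR(t)$; absorbing it yields $\sum_{j=0}^{t}(NT)^{j}NR(t-j)$, which is the claim for $t$. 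Care is needed that matrix products are kept in the correct left-to-right order throughout (the matrices $N$ and $T$ do not in general commute), and that the argument of $R$ is tracked through the shift.

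The main obstacle, such as it is, is therefore not conceptual but clerical: getting the re-indexing and the boundary term to line up so that the single $NR(t)$ produced by the recurrence fills precisely the $i=0$ slot of the target sum, while respecting non-commutativity of the matrix factors. Once the $K(-1)=0$ convention and the index shift are handled cleanly, the identity follows immediately.
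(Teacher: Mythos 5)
Your proof is correct and takes essentially the same route as the paper: the paper unrolls the recurrence $K(t)=NTK(t-1)+NR(t)$ informally (``continuing like this'') and anchors the expansion with $K(0)=NR(0)$, which is precisely the induction you make explicit. The only cosmetic difference is that you derive $K(0)=NR(0)$ from a $K(-1)=0$ convention, whereas the paper stipulates it directly on the grounds that no triggering occurs at $t=0$.
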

\begin{proof}
Substituting the value of $K(t-1)$ in equation~\ref{eq:2}, we get,
\begin{align*}
K(t)&=N(T(NTK(t-2) + NR(t-1)) + R(t))\\
K(t)&=(NT)^2K(t-2) + (NT)NR(t-1) + NR(t)\\
\end{align*}
Continuing like this, we get,
\begin{align*}
K(t)&=(NT)^tK(0) + (NT)^{t-1}NR(1) + \\&(NT)^{t-2}NR(2) + \dots + (NT)^0NR(t)\\
\end{align*}
Since at time $t=0$, no triggering happens, and the total knowledge of the system is given by the internal knowledge of all the users, we take $K(0)=NR(0)$, which gives us,
\begin{align*}
K(t)&=(NT)^tNR(0) + \dots + (NT)^0NR(t)\\
K(t)&=\sum_{i=0}^{t}(NT)îNR(t-i)\\
\end{align*}
\end{proof}
In order to get the total KUs ever generated in the KBS, we consider another matrix $K$ as follows:
\[K=
\begin{bmatrix}
k_1\\
k_2\\
\vdots\\
k_m
\end{bmatrix}
\]
where $k_i$s are scalar values storing the total number of KUs of category $t$ ever generated in the KBS.
\begin{theorem}\label{th:1}
Given the matrices $N$, $T$ and $R$, the net knowledge in the system at the end of the Knowledge building process is given by,
$$K=(I-NT)^{-1}NR$$ 
assuming $$\rho(NT)< 1$$ where $I$ is an Identity Matrix of the order $m \times m$.
\end{theorem}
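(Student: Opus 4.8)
The plan is to obtain $K$, the grand total of KUs ever generated, by summing the per-timestep quantity $K(t)$ from the preceding Lemma over all time, and then collapsing the resulting double series into the two closed factors $(I-NT)^{-1}$ and $NR$. First I would write $K=\sum_{t=0}^{\infty}K(t)$, which is legitimate because $K(t)$ in equation~\ref{eq:2} counts the KUs added \emph{at} time $t$, so the cumulative count ever produced is precisely the sum over all timestamps. Substituting the Lemma then gives the double sum $K=\sum_{t=0}^{\infty}\sum_{i=0}^{t}(NT)^iNR(t-i)$.

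The key algebraic step is a change of summation index. Setting $j=t-i$ turns the triangular region $\{(t,i):t\ge 0,\,0\le i\le t\}$ into the full quadrant $\{(i,j):i\ge 0,\,j\ge 0\}$, so the double sum factors as $K=\bigl(\sum_{i=0}^{\infty}(NT)^i\bigr)N\bigl(\sum_{j=0}^{\infty}R(j)\bigr)$. Now I would invoke the two ingredients separately: the inner sum over the internal-knowledge functions collapses by the defining relation $\sum_{t=0}^{\infty}R(t)=R$ (the vector form of $\sum_t r_i(t)=r_i$), and the matrix series is a Neumann series, $\sum_{i=0}^{\infty}(NT)^i=(I-NT)^{-1}$. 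Assembling these yields $K=(I-NT)^{-1}NR$, as claimed.

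The hard part is analytic rather than algebraic: justifying that these infinite sums exist and that the reindexing and factorization are legitimate. This is exactly where the hypothesis $\rho(NT)<1$ enters. I would argue that $\rho(NT)<1$ forces $(NT)^i\to 0$ and makes $\sum_i (NT)^i$ converge absolutely (choosing a submultiplicative norm $\|\cdot\|$ with $\|(NT)^i\|\le C\,\alpha^i$ for some $\rho(NT)<\alpha<1$), which simultaneously guarantees invertibility of $I-NT$ and the Neumann-series identity. Absolute convergence then licenses the Fubini-style rearrangement of the double sum and the splitting of the product of series used above. A minor point to record is that the $R(t)$ are entrywise nonnegative and sum to the finite vector $R$, so the internal-knowledge series converges on its own; combined with the geometric decay of $(NT)^i$, the entire expression is well defined.
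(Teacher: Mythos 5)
Your proof follows essentially the same route as the paper's: sum the Lemma's expression for $K(t)$ over all time, reorganize the double sum so that $\sum_{t}R(t)$ collapses to $R$, and evaluate the Neumann series $\sum_{i}(NT)^i=(I-NT)^{-1}$ under the hypothesis $\rho(NT)<1$. If anything, your treatment is cleaner: the explicit reindexing $j=t-i$ over the triangular region and the absolute-convergence justification make rigorous the interchange of summations that the paper writes loosely (its intermediate lines reuse $t$ as both a summation bound and a summation index).
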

\begin{proof}
Lemma\ref{eq:2} gives the KUs generated at time $t$ for all the categories. Summation of $K(t)$ over the time period from $0$ to $\infty$ gives,
\begin{align*}
K&= \sum_{t=0}^{\infty} \left(\sum_{i=0}^{t}(NT)^{i}NR(t-i)\right) \\
K&=\sum_{i=0}^{t}\sum_{t=0}^{\infty}(NT)^{i}NR(t-i) \\
K&= \sum_{i=0}^{t}(NT)^{i}NR \text{, where $R$ is as defined in (\ref{mat:R})}\\
\end{align*}
If the spectral radius of $(NT)$ is less than $1$, $\sum_{i=0}^{t}(NT)^{i}$ converges to $(I-NT)^{-1}$, i.e.
\begin{align*}
K&=(I-NT)^{-1}NR \text{, } 
\end{align*}
where $I$ is an Identity matrix.
\end{proof}

In the above proof, $\rho(NT)$ has been taken to be less than 1. This is because we are assuming a bounded system. If we take $\rho$ to be greater than or equal to 1, then the term $\sum_{i=0}^{t}(NT)^{i}$ becomes increases exponentially, which is not a practical scenario. 

The above theorem shows that the total knowledge in the system is independent of the distribution of $R(t)$ i.e. whether the entire internal knowledge of users enters the system at $t=0$ or it enters in discrete time, the net knowledge in the system remains unchanged. The theorem also helps computing the triggered knowledge contribution in the knowledge building, as shown by the corollary that follows.
\begin{corollary}
The contribution of triggered knowledge to the total knowledge of the system is given by $N.T.K$
\end{corollary}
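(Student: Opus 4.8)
The plan is to start from the closed-form expression for total knowledge established in Theorem~\ref{th:1}, namely $K = (I - NT)^{-1}NR$, and to algebraically split it into its internal and triggered components. First I would clear the inverse by left-multiplying by $(I - NT)$. From $K = (I-NT)^{-1}NR$ we obtain $(I - NT)K = NR$, and distributing gives $K - NTK = NR$, that is,
\begin{equation*}
K = NR + NTK.
\end{equation*}
This identity is the crux of the argument: it exhibits the total knowledge as a sum of exactly two matrices.

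Next I would interpret the two summands in terms of the model. The term $NR$ is precisely the \emph{internal} knowledge of the system: $R$ stores the per-user internal knowledge $r_i$ of each category, and $N$ is the diagonal matrix of user counts, so the $i$-th component $n_i r_i$ is the total internal knowledge contributed by category $i$ --- exactly the knowledge that would have entered the system in the absence of any group dynamics. By the earlier classification of sources of knowledge units, the total knowledge $K$ decomposes by definition into internal knowledge plus triggered knowledge, so the residual $K - NR$ must account for precisely the triggered portion. Combining this with the identity above yields triggered knowledge $= K - NR = NTK$, which is the claimed result.

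The main obstacle here is conceptual rather than computational: the algebra reduces to a single line once the inverse is cleared, so the real content lies in justifying the identification of $NR$ with the group-independent (internal) knowledge contribution, and in arguing that the complementary term $NTK$ therefore captures all of the triggering-induced knowledge. One can further sanity-check this reading against the recursion $K(t)=N(TK(t-1)+R(t))$ of equation~\ref{eq:2}, where the summand $NTK(t-1)$ is the knowledge generated at step $t$ through triggering by the previous step's units while $NR(t)$ is the fresh internal contribution; summing over all $t$ reproduces the same split $K = NTK + NR$, confirming that $NTK$ is indeed the aggregate triggered knowledge.
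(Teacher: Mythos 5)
Your proposal is correct and follows essentially the same route as the paper's own proof: clear the inverse to get $(I-NT)K = NR$, identify $NR$ as the total internal knowledge, and conclude that the triggered knowledge $K - NR$ equals $NTK$. The extra sanity check against the recursion $K(t)=N(TK(t-1)+R(t))$ is a nice addition but does not change the argument.
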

\begin{proof}
From Theorem \ref{th:1}, 
\begin{align*}
K&=(I-NT)^{-1}NR\\
\implies (I-NT)K&=NR\\
\end{align*}
The contribution of triggered knowledge in $K$ is equal to $K-NR$, where $NR$ is the total internal knowledge added to the system.
\begin{align*}
\implies K-NTK&=NR\\
\implies K-NR&=NTK\\
\end{align*}
\end{proof}
The corollary indicates that the ratio of triggered knowledge to the total knowledge is directly proportional to the triggering matrix $T$. This means that the total knowledge that a system is able to produce depends on how much the users of the system are able to trigger each other.

In order to know the shape of the surface curve when we change the number of users across the categories, we simulated the results of Theorem ~\ref{th:1} by taking the following parameters:

$n=100$\\
$m=3$\\
$
T=
\begin{bmatrix}
0.001 & 0.003 & 0.0004\\
0.03 & 0.001 & 0.005 \\
0.03& 0.008 & 0.003\\
\end{bmatrix}
$

For simplicity, the internal knowledge ($R$) for all the categories was taken to be equal. It was observed that the surface plot comes out to be a convex plot which exhibits a peak corresponding to a particular distribution of users. Figure~\ref{surface} shows the plot where the axes for the represent $n_1$, $n_2$ and $K$ ($n_3$ is taken as $(n-n_1-n_2)$). The maxima of the curve was found at $K = 3081.40849516$ with the distribution $[39,30,31]$.
\begin{figure}
\centering
\includegraphics[scale=0.50]{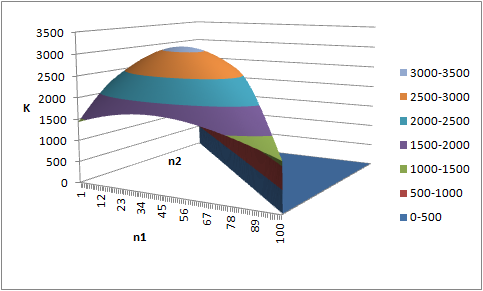}
\caption{The convex plot of net knowledge (K) generated across all the possible distributions}
\label{surface}
\end{figure}

\subsection{Hill Climbing Algorithm For Ideal Skillset Distribution}
Since we obtained a convex surface plot with respect to the distributions, we can develop a hill climbing algorithm which starts from any random distribution and gradually moves towards the ideal distribution of users which leads to an accelerated knowledge building. We now provide a `Hill Climbing Algorithm' (See Algorithm 1) for finding the ideal skill set distribution across categories. 

\begin{algorithm}
\caption{Hill Climbing Algorithm For Ideal Distribution of Users across Categories}\label{algo1}

%

\textbf{Input:} Number of users $n$, Number of categories $m$, Triggering Matrix $T$ and Internal Knowledge matrix $R$. \\
\textbf{Output:} Ideal Distribution Matrix $N$
\begin{enumerate}
\item Pick a random distribution $N=[n_1,n_2,\dots,n_m]$ of users across different categories, such that $\sum_{i=1}^{m}n_i=n$
\item Consider all the $2*{m\choose{2}}=m(m-1)$ neighbouring distributions ($Nb$) of D, defined as:

$Nb=[~]$\\
$for~ i=1:m$\\
$~~~~~~for~ j=1:m$\\
$~~~~~~~~~~If ~i\neq j:$\\
$~~~~~~~~~~~~~~D=(n_1, n_2, \dots , n_i-1, \dots , n_j+1, \dots , n_m)$\\
$~~~~~~~~~~~~~~Nb.append(D)$

\item Calculate $K_{Nb[j]}~ \forall~ 1\leqslant j \leqslant m(m-1)$	based on the following formula:
$$K=(I-NT)^{-1}NR$$	\\
where $K_{Nb[j]}$ represents the net knowledge in the system with the distribution $Nb[j]$

Choose $i$ $such~ that~ K_{Nb[i]}=\displaystyle{\max_j}~K_{Nb[j]}$
\item If $K_{Nb[i]}>K_N$ then $N=Nb[i]$ and we repeat steps $2$ to $4$\\
else:\\
return $N$ //The ideal skill set distribution
\end{enumerate}
\end{algorithm}

\begin{figure}
\centering
\includegraphics[scale=0.4]{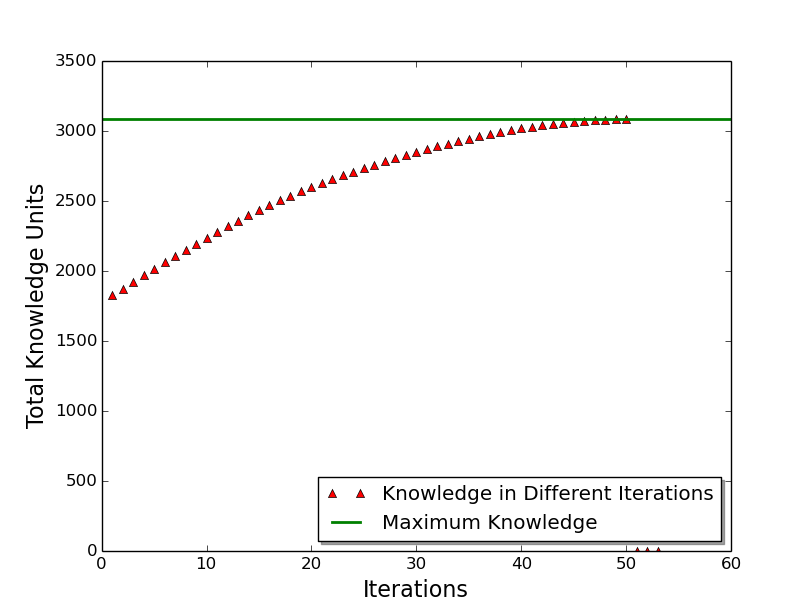}
\caption{The execution of Hill Climbing Algorithm: moving from one distribution to another neighboring distribution, leading to the perfect one}
\label{hill}
\end{figure}

For verification, we simulated the proposed algorithm on the same specifications as defined above. Figure~\ref{hill} shows how the algorithm moves from an initial distribution (which is input to the algorithm) to ideal  distribution i.e. the distribution which produces maximum knowledge in the given KBE. The final output of the algorithm ($[39,30,31]$) is coherent with results obtained from the previous simulation, supporting the correctness of the proposed algorithm.

\section{Conclusion and Future Work}
This paper is a step towards understanding the dynamics of knowledge building in a crowdsourced environment. We consider the variance of expertise present in the crowd and divide them into categories. In line with the previous research in cognitive sciences, we consider triggering among the knowledge frames to be the reason behind better performance of the groups than individuals. Through a model, we explain the interplay of these categories taking into account the varied amount of triggering that takes place among them. We then compare this triggered knowledge with the knowledge that would have been added to the system in the absence of interaction among the users. We find that triggered knowledge is a significant part of the total knowledge produced.

We emphasize the importance of a right mix of skill set present across the categories in order to accelerate the knowledge building process. We observe that as we keep changing the distribution of users across categories, we get different amount of knowledge produced with respect to time. In that context, we try to find the ideal distribution which accelerates the knowledge building process to the best possible extent. We computationally observe that the knowledge curve with respect to the distribution of users is a convex curve, the peak of which gives the ideal distribution. We further develop a Hill Climbing Algorithm which starts from any random distribution and outputs the ideal distribution for the best knowledge building experience. We believe that this study will inform the portal designers to take better decisions while developing a KBE. For example, knowing the number of categories in the system and the triggering among them, they can use various incentivizing mechanisms to encourage the participation of users in different categories. This theory can also help explain why some KBEs like Wikipedia have flourished while some others like could not proliferate.

In future, the proposed model may be verified for various KBEs like Wikipedia, Quora, StackOverflow etc. The model could also be tested by changing the existing distribution of users in a KBE, and observing the impact of this change towards the acceleration/deceleration in the knowledge building process. In the model, all the users were assumed to be present in the system at the start of the knowledge building process, which may not always be the case. Therefore the model may be further improvised to account for this parameter.

\bibliographystyle{apacite}

\setlength{\bibleftmargin}{.125in}
\setlength{\bibindent}{-\bibleftmargin}

\bibliography{ecosystem_cogsci}

\end{document}